\def\ps@headings{%
\def\@oddhead{\mbox{}\scriptsize\rightmark \hfil \thepage}%
\def\@evenhead{\scriptsize\thepage \hfil\leftmark\mbox{}}%
\def\@oddfoot{}%
\def\@evenfoot{}}
\newcommand{\FF}{\mathbb{F}}
\newcommand{\MM}{\mathbb{M}}
\newcommand{\ov}{\mathbf{o}}
\newcommand{\pv}{\mathbf{p}}
\newcommand{\sv}{\mathbf{s}}
\newcommand{\xv}{\mathbf{x}}
\newcommand{\yv}{\mathbf{y}}
\newcommand{\hsrc}{{\rm HSRC}}
\newcommand{\nmax}{n_{max}}
\newcommand{\probup}{p_{frag}}
\newcommand{\objup}{p_{obj}}
\newcommand{\divsrc}{\delta} 
\newtheorem{ex}{Example}
\newtheorem{lem}{Lemma}
\newtheorem{cor}{Corollary}
\newtheorem{defn}{Definition}
\begin{document}
\title{Self-repairing Homomorphic Codes \\for Distributed Storage Systems\thanks{F. Oggier's research is supported in part by the Singapore National Research Foundation under Research Grant NRF-RF2009-07 and NRF-CRP2-2007-03,
and in part by the Nanyang Technological University under research grant M58110049 and M58110070. A. Datta's research for this work has been supported by AcRF Tier-1 grant number RG 29/09.}}
\author{
\IEEEauthorblockN{Fr\'ed\'erique Oggier}
\IEEEauthorblockA{Division of Mathematical Sciences \\
School of Physical and Mathematical Sciences\\
Nanyang Technological University\\ Singapore\\
Email: frederique@ntu.edu.sg}
\and
\IEEEauthorblockN{Anwitaman Datta}
\IEEEauthorblockA{Division of Computer Science\\
School of Computer Engineering\\
Nanyang Technological University\\ Singapore\\
Email: anwitaman@ntu.edu.sg}
}
\maketitle

%
%

\begin{abstract}
Erasure codes provide a storage efficient alternative to replication based redundancy in (networked) storage systems. They however entail high communication overhead for maintenance, when some of the encoded fragments are lost and need to be replenished. Such overheads arise from the fundamental need to recreate (or keep separately) first a copy of the whole object before any individual encoded fragment can be generated and replenished. There has been recently intense interest to explore alternatives, most prominent ones being regenerating codes (RGC) and hierarchical codes (HC). We propose as an alternative a new family of codes to improve the maintenance process, which we call \emph{self-repairing codes} (SRC), with the following salient features: (a) encoded fragments can be repaired directly from other subsets of encoded fragments without having to reconstruct first the original data, ensuring that (b) a fragment is repaired from a fixed number of encoded fragments, the number depending only on how many encoded blocks are missing and independent of which specific blocks are missing. These properties allow for not only low communication overhead to recreate a missing fragment, but also independent reconstruction of different missing fragments in parallel, possibly in different parts of the network. The fundamental difference between SRCs and HCs is that different encoded fragments in HCs do not have symmetric roles (equal importance). Consequently the number of fragments required to replenish a specific fragment in HCs depends on which specific fragments are missing, and not solely on how many. Likewise, object reconstruction may need different number of fragments depending on which fragments are missing. RGCs apply network coding over $(n,k)$ erasure codes, and provide network information flow based limits on the minimal maintenance overheads. RGCs need to communicate with at least $k$ other nodes to recreate any fragment, and the minimal overhead is achieved if only one fragment is missing, and information is downloaded from all the other $n-1$ nodes. We analyze the \emph{static resilience} of SRCs with respect to traditional erasure codes, and observe that SRCs incur marginally larger storage overhead in order to achieve the aforementioned properties. The salient SRC properties naturally translate to \emph{low communication overheads} for reconstruction of lost fragments, and allow reconstruction with lower latency by facilitating \emph{repairs in parallel}. These desirable properties make self-repairing codes a good and practical candidate for networked distributed storage systems.
\end{abstract}
\textbf{Keywords:} coding, networked storage, self-repair
%
%
\section{Introduction}

Networked storage systems have gained prominence in recent years. These include various genres, including decentralized peer-to-peer storage systems, as well as dedicated infrastructure based data-centers and storage area networks. Because of storage node failures, or user attrition in a peer-to-peer system, redundancy is essential in networked storage systems. This redundancy can be achieved using either replication, or (erasure) coding techniques, or a mix of the two. Erasure codes require an object to be split into $k$ parts, and mapped into $n$ encoded fragments, such that any $k$ encoded fragments are adequate to reconstruct the original object. Such coding techniques play a prominent role in providing storage efficient redundancy, and are particularly effective for storing large data objects and for archival and data back-up applications (for example, CleverSafe \cite{cleversafe}, Wuala \cite{wuala}).

Redundancy is lost over time because of various reasons such as node failures or attrition, and mechanisms to maintain redundancy are essential. It was observed in \cite{Liskov} that while erasure codes are efficient in terms of storage overhead, maintenance of lost redundancy entail relatively huge overheads. A naive approach to replace a single missing fragment will require that $k$ encoded fragments are first fetched in order to create the original object, from which the missing fragment is recreated and replenished. This essentially means, for every lost fragment, $k$-fold more network traffic is incurred when applying such a naive strategy.

Several engineering solutions can partly mitigate the high maintenance overheads. One approach is to use a `hybrid' strategy, where a full replica of the object is additionally maintained \cite{Liskov}. This ensures that the amount of network traffic equals the amount of lost data.\footnote{In this paper, we use the terms `fragment' and `block' interchangeably. Depending on the context, the term `data' is used to mean either fragment(s) or object(s).} A spate of recent works \cite{netcod,hierarchical} argue that the hybrid strategy adds storage inefficiency and system complexity. Another possibility is to apply lazy maintenance \cite{TotalRecall,dattaP2P}, whereby maintenance is delayed in order to amortize the maintenance of several missing fragments. Lazy strategies additionally avoid maintenance due to temporary failures. Procrastinating repairs however may lead to a situation where the system becomes vulnerable, and thus may require a much larger amount of redundancy to start with. Furthermore, the maintenance operations may lead to spikes in network resource usage \cite{dattaSSS}.

It is worth highlighting at this juncture that erasure codes had originally been designed in order to make communication robust, such that loss of some packets over a communication channel may be tolerated. Network storage has thus benefitted from the research done in coding over communication channels by using erasure codes as black boxes that provide efficient distribution and reconstruction of the stored objects. Networked storage however involves different challenges but also opportunities not addressed by classical erasure codes. Recently, there has thus been a renewed interest \cite{netcod,UCBsubm,hierarchical,biersackRGC,vijaykumarallerton} in designing codes that are optimized to deal with the vagaries of networked storage, particularly focusing on the maintenance issue. In a volatile network where nodes may fail, or come online and go offline frequently, new nodes must be provided with fragments of the stored data to compensate for the departure of nodes from the system, and replenish the level of redundancy (in order to tolerate further faults in future). In this paper, we propose a new family of codes called
\emph{self-repairing codes} (SRC), which are tailored to fit well typical networked storage environments.

\subsection{Related work}

In \cite{netcod,UCBsubm}, Dimakis et al. propose regenerating codes (RGC) by
exposing the need of being able to reconstruct an erased encoded block from a
smaller amount of data than would be needed to first reconstruct the whole
object. They however do not address the problem of building new codes that
would solve the issue, but instead use classical erasure codes as a black box
over a network which implements random linear network coding and propose leveraging the properties of network
coding to improve the maintenance of the stored data. Network information flow based analysis shows the possibility to replace missing fragment using network traffic equalling the volume of lost data. Unfortunately, it is possible to achieve this optimal limit only by communicating with all the $n-1$ remaining blocks. Consequently, to the best of our knowledge, regenerating codes literature generally does not discuss how it compares with engineering solutions like lazy repair, which amortizes the repair cost by initiating repairs only when several fragments are lost. Furthermore, for RGCs to work, even sub-optimally, it is essential to communicate with at least $k$ other nodes to reconstruct any missing fragment. Thus, while the volume of data-transfer for maintenance is lowered, RGCs are expected to have higher protocol overheads, implementation and computational complexity. For instance, it is noted in \cite{biersackRGC} that a randomized linear coding based realization of RGCs takes an order of magnitude more computation time than standard erasure codes for both encoding and decoding.
The work of \cite{vijaykumarallerton} improves on the original RGC papers in that instead of arguing the existence of regenerating codes via deterministic network coding algorithms, they provide explicit network code constructions.

In \cite{hierarchical}, the authors make the simple observation that encoding two bits into three by XORing the two information bits has the property that any two encoded bits can be used to recover the third one. They then propose an iterative construction where, starting from small erasure codes, a bigger code, called hierarchical code (HC), is built by XORing subblocks made by erasure codes or combinations of them. Thus a subset of encoded blocks is typically enough to regenerate a missing one. However, the size of this subset can vary, from the minimal to the maximal number of encoded subblocks, determined by not only the number of lost blocks, but also the specific lost blocks. So given some lost encoded blocks, this strategy may need an arbitrary number of other encoded blocks to repair.

\subsection{Self Repairing Codes}

While motivated by the same problem as RGCs and HCs, that of efficient
maintenance of lost redundancy in coding based distributed storage systems,
the approach of self-repairing codes (SRC) tries to do so at a somewhat
different point of the design space. We try to minimize the number of nodes
necessary to reduce the reconstruction of a missing block, which automatically
translates into lower bandwidth consumption, but also lower computational
complexity of maintenance, as well as the possibility for
faster and parallel replenishment of lost redundancy.

We define the \emph{concept of self-repairing codes} as
$(n,k)$ codes designed to suit networked storage systems, that encode $k$
fragments of an object into $n$ encoded fragments to be stored at $n$ nodes,
with the properties that:\\
(a) \emph{encoded fragments can be repaired directly from other subsets
of encoded fragments without having to reconstruct first the original data}. \\
More precisely, based on the analogy with the error correction capability of erasure codes, which is of any $n-k$ losses independently of which losses,\\ (b) \emph{a fragment can be repaired from a fixed number of encoded fragments, the number depending only on how many encoded blocks are missing and independent of which specific blocks are missing.} 

To do so, SRCs naturally require more redundancy than erasure codes. We will
see more precisely later on that there is a tradeoff between the repair ability
and this extra redundancy. Consequently, SRCs can recreate the whole object
with $k$ fragments, though unlike for erasure codes, these are not arbitrary
$k$ fragments, though many such $k$ combinations can be found (see Section
\ref{sec:static} for more details).

Note that even for traditional erasure codes, the property (a) may
coincidentally be satisfied, but in absence of a systematic mechanism this
serendipity cannot be leveraged. In that respect, HCs \cite{hierarchical}
may be viewed as a way to do so, and are thus the closest example of
construction we have found in the literature, though they do not give any
guarantee on the number of blocks needed to repair given the number of losses,
i.e., property (b) is not satisfied, and has no deterministic guarantee for
achieving property (a) either. We may say that in spirit, SRC is closest to
hierarchical codes - at a very high level, SRC design features mitigate the
drawbacks of HCs.

In this work, we make the following \emph{contributions}:\\
(i) We propose a new family of codes, self-repairing codes (SRC), designed specifically as an alternative to erasure codes (EC) for providing redundancy in networked storage systems, which allow repair of individual encoded blocks using only few other encoded blocks. Like ECs, SRCs also allow recovery of the whole object using $k$ encoded fragments, but unlike in ECs, these are not any arbitrary $k$ fragments. However, numerous specific suitable combinations exist.\\
(ii) We provide a deterministic code construction called \emph{Homomorphic Self-Repairing Code} (HSRC), showcasing that SRC codes can indeed be realized.\\
(iii) HSRC self-repair operations are computationally efficient. It is done by XORing encoded blocks, each of them containing information about all fragments of the object, though the encoding itself is done through polynomial evaluation, not by XORing.\\
(iv) We show that for equivalent static resilience, marginally more storage is needed than traditional erasure codes to achieve self-repairing property.\\
(v) The need of few blocks to reconstruct a lost block naturally translates to low overall bandwidth consumption for repair operations. SRCs allow for both eager as well as lazy repair strategies for equivalent overall bandwidth consumption for a wide range of practical system parameter choices. They also outperform lazy repair with the use of traditional erasure codes for many practical parameter choices.\\
(vi) We show that by allowing parallel and independent repair of different encoded blocks, SRCs facilitate fast replenishment of lost redundancy, allowing a much quicker system recovery from a vulnerable state than is possible with traditional codes.

%
%
\section{Linear Coding through Polynomials}
\label{sec:lincod}

Since this work aims at designing specifically tailored codes for networked
storage systems, we first briefly recall the mechanisms behind erasure codes design.
In what follows, we denote by $\FF_q$ the finite field with $q$ elements, and
by $\FF_q^*$ the finite field without the zero element.
If $q=2^m$, an element $\xv\in\FF_q$ can be represented by an $m$-dimensional
vector $\xv=(x_1,\ldots,x_m)$ where $x_i\in\FF_2$, $i=1,\ldots,m$, coming from
fixing a basis , namely  $\xv=\sum_{i=1}^mx_iw^{i-1}$ where
$\{1,w,\ldots,w^{m-1}\}$ forms a $\FF_2$-basis of $\FF_q$, and $w$ is a root
of an irreducible monic polynomial of degree $m$ over $\FF_2$.
The finite field $\FF_2$ is nothing else than the two bits 0 and 1, with
addition and multiplication modulo 2.

\subsection{Erasure codes}

A linear $(n,k,d)$ erasure code over a $q$-ary alphabet is formally a linear
map $c:\FF_{q^k}\rightarrow\FF_{q^n},~\sv\mapsto c(\sv)$
which maps a $k$-dimensional vector $\sv$ to an $n$-dimensional vector
$c(\sv)$. The set $C$ of codewords $c(\sv)$, $\sv\in\FF_{q^k}$,
forms the code (or codebook). The third parameter $d$ refers to the minimum
distance of the code: $d=\min_{\xv\neq\yv \in C} d(\xv,\yv)$ where the Hamming distance $d(\xv,\yv)$ counts the number of positions at which the coefficients of $\xv$ and $\yv$ differ. The minimum distance describes how many erasures can
be tolerated, which is known to be at most $n-k$, achieved by maximum distance
separable (MDS) codes. MDS codes thus allow to recover any codeword out of $k$
coefficients.

Let $\ov$ be an object of size $M$ bits, that is $\ov\in\FF_{2^M}$, and let $k$ be a positive integer such that $k$ divides $M$.
We can write
\[
\ov=(\ov_1,\ldots,\ov_k),~\ov_i\in\FF_{2^{M/k}}
\]
which requires the use of a $(n,k)$ code over $\FF_{2^{M/k}}$, that
maps $\ov$ to an $Mn/k$-dimensional binary vector $\xv$, or equivalently, an $n$-dimensional vector
\[
\xv=(\xv_1,\ldots,\xv_n),~\xv_i\in\FF_{2^{M/k}}.
\]

\subsection{Reed-Solomon Codes}
\label{subsec:encodpoly}

Since the work of Reed and Solomon \cite{ReedSolomon}, it is known that linear
coding can be done via polynomial evaluation. In short, take an object
$\ov=(o_1,o_2,\ldots,o_k)$ of size $M$, with each $\ov_i$ in $\FF_{2^{M/k}}$,
and create the polynomial
\[
p(X)=\ov_1+\ov_2X+\ldots\ov_kX^{k-1}\in\FF_{2^{M/k}}[X].
\]
Now evaluate $p(X)$ in $n$ elements $\alpha_1,\ldots,\alpha_n \in\FF_{2^{M/k}}^*$,
to get the codeword
\[
(p(\alpha_1),\ldots,p(\alpha_n)),~n\leq 2^{M/k}-1.
\]
\begin{ex}\label{ex:exRS23} \rm
Suppose the object $\ov=(o_1,o_2,o_3,o_4)$ has 4 bits, and
we want to make $k=2$ fragments: $\ov_1 = (o_1,o_2)\in \FF_4$,
$\ov_2= (o_3,o_4)\in \FF_4$. We use a $(3,2)$ Reed-Solomon code over
$\FF_4$, to store the file in 3 nodes.
Recall that $\FF_4=\{ (a_0,a_1),~a_0,a_1\in \FF_2\}=
\{ a_0+a_1 w,~a_0,a_1\in \FF_2\}$
where $w^2 = w+1$. Thus we can alternatively represent each fragment as:
$\ov_1= o_1+o_2w\in \FF_4$, $\ov_2= o_3+o_4w\in \FF_4$.
The encoding is done by first mapping the two fragments into a polynomial
$p(X)\in\FF_4[X]$:
\[
p(X)=(o_1+o_2w)+(o_3+o_4w)X,
\]
and then evaluating $p(X)$ into the three non-zero elements of $\FF_4$, to
get a codeword of length 3:
\[
(p(1),p(w),p(w+1))
\]
where $p(1)=o_1+o_3+w(o_2+o_4)$, $p(w)=o_1+o_4+w(o_2+o_3+o_4)$,
$p(w^2)=o_1+o_3+o_4+w(o_2+o_3)$,
so that each node gets two bits to store:
$(o_1+o_3,o_2+o_4)$ at node 1, $(o_1+o_4,o_2+o_3+o_4)$ at node 2,
$(o_1+o_3+o_4,o_2+o_3)$ at node 3.
\end{ex}

%
%
\section{Homomorphic Codes}

Encoding linearly data as explained in Section \ref{sec:lincod} can be done
with arbitrary polynomials. We now first describe a particular class of polynomials
that will play a key role in the construction of homomorphic codes, a class of
self-repairing codes presented in Subsection \ref{subsec:src}.

\subsection{Linearized polynomials}

Since we work over finite fields that contains $\FF_2$, recall that all
operations are done in characteristic 2, that is, modulo 2.
Let $a,b \in \FF_{2^m}$, for some $m\geq 1$. Then we have that
$(a+b)^2=a^2+b^2$ and consequently
\begin{equation}\label{eq:Frob}
(a+b)^{2^i}=\sum_{j=0}^{2^i} {2^i\choose j} a^j b^{2^i-j}= a^{2^i}+b^{2^i},~i\geq 1.
\end{equation}
Recall the definition of a linearized polynomial.
\begin{defn}
A {\em linearized polynomial} $p(X)$ over $\FF_q$, $q=2^m$, has the form
\[
p(X)=\sum_{i=0}^{k-1}p_iX^{q^i},~p_i\in\FF_q.
\]
\end{defn}
We now define a {\em weakly linearized polynomial} as
\begin{defn}
A {\em weakly linearized polynomial} $p(X)$ over $\FF_q$, $q=2^m$, has the form
\[
p(X)=\sum_{i=0}^{k-1}p_iX^{2^i},~p_i\in\FF_q.
\]
\end{defn}
We will see below why we chose this name.
We use the notation $k$ since later on it will indeed correspond to the
number of data symbols that can be encoded with the proposed scheme.
We start with a useful property of such polynomials.
\begin{lem}\label{lem:sr}
Let $a,b \in \FF_{2^m}$ and let $p(X)$ be a weakly linearized polynomial
given by $p(X)=\sum_{i=0}^{k-1}p_iX^{2^i}$.
We have
\[
p(a+b)=p(a)+p(b).
\]
\end{lem}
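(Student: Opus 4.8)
The plan is to prove $p(a+b) = p(a) + p(b)$ directly from the definition of a weakly linearized polynomial together with the Frobenius-type identity in equation~\eqref{eq:Frob}. Write $p(X) = \sum_{i=0}^{k-1} p_i X^{2^i}$, so that $p(a+b) = \sum_{i=0}^{k-1} p_i (a+b)^{2^i}$. The whole proof hinges on expanding each term $(a+b)^{2^i}$.

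First I would handle the $i=0$ term separately, since equation~\eqref{eq:Frob} as stated is asserted only for $i \geq 1$: for $i=0$ we trivially have $(a+b)^{2^0} = a+b = a^{2^0} + b^{2^0}$. For each $i \geq 1$, equation~\eqref{eq:Frob} gives $(a+b)^{2^i} = a^{2^i} + b^{2^i}$, the point being that in characteristic $2$ all the intermediate binomial coefficients ${2^i \choose j}$ with $0 < j < 2^i$ vanish modulo $2$ (this is Lucas' theorem, or can be seen by iterating $(x+y)^2 = x^2 + y^2$). So in all cases $(a+b)^{2^i} = a^{2^i} + b^{2^i}$.

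Then I would simply substitute and split the sum:
\[
p(a+b) = \sum_{i=0}^{k-1} p_i\bigl(a^{2^i} + b^{2^i}\bigr)
       = \sum_{i=0}^{k-1} p_i a^{2^i} + \sum_{i=0}^{k-1} p_i b^{2^i}
       = p(a) + p(b),
\]
using that addition in $\FF_{2^m}$ is commutative and associative to regroup the terms, and distributivity to pull $p_i$ through. This gives the claim.

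There is essentially no obstacle here; the content is entirely in equation~\eqref{eq:Frob}, which is already established in the excerpt. The only thing to be careful about is the edge case $i=0$ (handled above) and making sure the regrouping of the finite sum is legitimate, which it is since $\FF_{2^m}$ is a field. One could remark that this is why such polynomials deserve the name ``weakly linearized'': they are additive (i.e. $\FF_2$-linear) maps on $\FF_{2^m}$, just as genuinely linearized polynomials $\sum p_i X^{q^i}$ are $\FF_q$-linear.
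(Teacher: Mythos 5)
Your proof is correct and follows essentially the same route as the paper: apply the Frobenius identity \eqref{eq:Frob} termwise to $(a+b)^{2^i}$ and split the sum. The only difference is that you explicitly handle the $i=0$ term, which the paper silently subsumes; this is a minor (and welcome) extra bit of care, not a different argument.
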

\begin{IEEEproof}
Note that if we evaluate $p(X)$ in an element $a+b \in \FF_{2^m}$, we get,
using (\ref{eq:Frob}), that
\[
p(a+b)= \sum_{i=0}^{k-1}p_i(a+b)^{2^i}= \sum_{i=0}^{k-1}p_i(a^{2^i}+b^{2^i})
=p(a)+p(b).
\]
\end{IEEEproof}

We can strengthen the above lemma by considering instead a polynomial $p(X)$ over $\FF_q$, $q=2^m$, of the form:
\[
p(X)=\sum_{i=0}^{k-1}p_iX^{s^i},~p_i\in\FF_q,
\]
where $s=2^l$, $1\leq l\leq m$ ($l=m$ makes $p(X)$ a linearized polynomial).
We now get:
\begin{lem}
Let $a,b \in \FF_{2^m}$ and let $p(X)$ be the polynomial given by
$p(X)=\sum_{i=0}^{k-1}p_iX^{s^i}$, $s=2^l$, $m \geq l \geq 1$.
We have
\[
p(ua+vb)=up(a)+vp(b),~u,v\in \FF_s.
\]
\end{lem}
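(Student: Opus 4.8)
The plan is to mimic the proof of Lemma~\ref{lem:sr}, relying on two facts: powers of two behave additively in characteristic~$2$, and every element of the subfield $\FF_s$ is fixed by the $s$-th power map. First I would record the obvious generalization of (\ref{eq:Frob}): since $s^i$ is a power of $2$ for every $i\geq 0$, iterating $(x+y)^2=x^2+y^2$ gives $(x+y)^{s^i}=x^{s^i}+y^{s^i}$ for all $x,y\in\FF_{2^m}$. Applying this with $x=ua$, $y=vb$ and using commutativity of $\FF_{2^m}$ yields
\[
p(ua+vb)=\sum_{i=0}^{k-1}p_i(ua+vb)^{s^i}=\sum_{i=0}^{k-1}p_i\big(u^{s^i}a^{s^i}+v^{s^i}b^{s^i}\big).
\]

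The key step is then to observe that $u,v\in\FF_s=\FF_{2^l}$, and every $t\in\FF_{2^l}$ satisfies $t^{2^l}=t$, i.e.\ $t^s=t$; iterating, $t^{s^i}=t$ for all $i\geq 0$. Here one implicitly uses that $\FF_s$ is a subfield of $\FF_{2^m}$, i.e.\ $l\mid m$, which I would state explicitly since it is tacit in writing $u,v\in\FF_s\subseteq\FF_{2^m}$. Substituting $u^{s^i}=u$ and $v^{s^i}=v$ collapses the sum:
\[
p(ua+vb)=\sum_{i=0}^{k-1}p_i\big(ua^{s^i}+vb^{s^i}\big)=u\sum_{i=0}^{k-1}p_ia^{s^i}+v\sum_{i=0}^{k-1}p_ib^{s^i}=up(a)+vp(b),
\]
which is the claim. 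Taking $l=1$ (so $s=2$, $\FF_s=\FF_2$) recovers Lemma~\ref{lem:sr} as the special case $u=v=1$.

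There is no real obstacle here; the only point requiring a little care is the bookkeeping that $t\mapsto t^{s^i}$ is the identity on $\FF_s$, which rests on the fact that $\FF_s^*$ is a group of order $s-1$ (together with $0^{s^i}=0$), and on the hypothesis $l\mid m$ so that $\FF_s$ genuinely embeds in $\FF_{2^m}$. Everything else is the same characteristic-$2$ manipulation already used in the proof of Lemma~\ref{lem:sr}.
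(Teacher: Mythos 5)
Your proof is correct and follows essentially the same route as the paper's: expand $(ua+vb)^{s^i}$ using the characteristic-$2$ identity and then use that elements of $\FF_s$ are fixed by the $s$-th power map. Your added remark that one implicitly needs $l\mid m$ for $\FF_s$ to embed in $\FF_{2^m}$ (the paper only states $1\le l\le m$) is a fair and worthwhile clarification, but it does not change the argument.
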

\begin{IEEEproof}
If we evaluate $p(X)$ in $ua+vb$, we get
\[
p(ua+vb)
= \sum_{i=0}^{k-1}p_i(ua+vb)^{s^i} \\
= \sum_{i=0}^{k-1}p_i((ua)^{s^i}+(vb)^{s^i})
\]
again by (\ref{eq:Frob}), and
\[
p(ua+vb)
= \sum_{i=0}^{k-1}p_i(ua^{s^i}+vb^{s^i})\\
= u\sum_{i=0}^{k-1}p_ia^{2^i}+ v\sum_{i=0}^{k-1}p_ib^{2^i}
\]
using the property that $u^s=u$ for $u\in\FF_s$.
\end{IEEEproof}

\subsection{Self-repairing codes using weakly linearized polynomials}
\label{subsec:src}

We now mimic the way encoding works for Reed-Solomon codes (see Subsection
\ref{subsec:encodpoly}) for weakly linearized polynomials. Note that neither
the encoding nor the decoding process described below are actual efficient
algorithms. Implementations of these processes is a separate issue to be dealt with.

{\flushleft {\bf Encoding:}}
\begin{enumerate}
\item
Take an object $\ov$ of length $M$, with $k$ a positive integer that divides
$M$. Decompose $\ov$ into $k$ fragments of length $M/k$:
\[
\ov=(\ov_1,\ldots,\ov_k),~\ov_i\in\FF_{2^{M/k}}.
\]
\item
Take a linearized polynomial with coefficients in $\FF_{2^{M/k}}$
\[
p(X)=\sum_{i=0}^{k-1}p_iX^{2^i},
\]
and encode the $k$ fragments as coefficients, namely take $p_i=\ov_{i+1}$,
$i=0,\ldots,k-1$.
\item
Evaluate $p(X)$ in $n$ non-zero values $\alpha_1,\ldots,\alpha_n$ of
$\FF_{2^{M/k}}$ to get a $n$-dimensional codeword
\[
(p(\alpha_1),\ldots,p(\alpha_n)),
\]
and each $p(\alpha_i)$ is given to node $i$ for storage. In particular,
we need
\begin{equation}\label{eq:boundn}
n\leq 2^{M/k}-1.
\end{equation}
\end{enumerate}

{\flushleft {\bf Decoding:}}
\begin{enumerate}
\item
Given $k$ linearly independent fragments, the node that wants to reconstruct
the file computes $2^k-1$ linear combinations of the $k$ fragments, which
gives $2^k-1$ points in which $p$ is evaluated.
\item
Lagrange interpolation guarantees that it is enough to have $2^{k-1}+1$ points
(which we have since $2^k-1 \geq 2^{k-1}+1$ for $k\geq 2$) to reconstruct
uniquely the polynomial $p$ and thus the data file. This requires
\begin{equation}\label{eq:boundk}
2^{k-1}+1 \leq 2^{M/k}-1.
\end{equation}
\end{enumerate}

{\flushleft {\bf Self-repairing:}} A codeword constructed with the above
procedure is of the form $(p(\alpha_1),\ldots,p(\alpha_n))$, where each
coefficient is in $\FF_{2^{M/k}}$ and $k<n\leq 2^{M/k}-1$. We will denote by
$\nmax$ the maximum value that $n$ can take, namely $\nmax=2^{M/k}-1$.
We know that $\FF_{2^{M/k}}$ contains a $\FF_2$-basis $B=\{b_1,\ldots,b_{M/k}\}$
with $M/k$ linearly independent elements.  If  $n=2^{M/k}-1$,  the $\alpha_i$,
$i=1,\ldots,n$, can be expressed as $\FF_2$-linear combinations of the basis
elements, and we have from Lemma \ref{lem:sr} that
\[
\alpha_i=\sum_{j=1}^{M/k}\alpha_{ij}b_j,~\alpha_{ij}\in\FF_2
\Rightarrow
p(\alpha_i)=\sum_{j=1}^{M/k}\alpha_{ij}p(b_j).
\]
In words, that means that an encoded fragment can be obtained as a linear
combination of other encoded fragments. In terms of computational complexity,
this further implies that the cost of a block reconstruction is that of some
XORs (one in the most favorable case, when two terms are enough to reconstruct
a block, up to $k-1$ in the worst case).
On the other hand, if $\alpha_1,\ldots,\alpha_n$ are
contained in $B$, then the code has no self-repairing property.

For any choice of a positive integer $k$ that divides $M$, we work in the
finite field $\FF_{2^{M/k}}$. To do explicit computations in this finite field,
it is convenient to use the generator of the multiplicative group
$\FF_{2^{M/k}}^*=\FF_{2^{M/k}}\backslash\{0\}$, that we will denote by $w$.
A generator has the property that $w^{2^{M/k}-1}=1$, and there is no smaller
positive power of $w$ for which this is true.

\begin{ex}\label{ex:complete}\rm
Take a data file $\ov=(o_1,\ldots,o_{12})$ of $M=12$ bits, and choose
$k=3$ fragments. We have that $M/k=4$, which satisfies (\ref{eq:boundk}),
that is $2^2+1=5 \leq 2^4-1=15=\nmax$.

The file $\ov$  is cut into 3 fragments $\ov_1=(o_1,\ldots,o_4)$,
$\ov_2=(o_5,\ldots,o_8)$, $\ov_3=(o_9,\ldots,o_{12}) \in \FF_{2^4}$.
Let $w$ be a generator of the multiplicative group of $\FF_{2^4}^*$, such
that $w^4=w+1$. The polynomial used for the encoding is
\[
p(X)=\sum_{i=1}^4o_iw^iX+\sum_{i=1}^4o_{i+4}w^iX^2+\sum_{i=1}^4o_{i+8}w^iX^4.
\]
The $n$-dimensional codeword is obtained by evaluating $p(X)$ in $n$ elements
of $\FF_{2^4}$, $n\leq 15=\nmax$ by (\ref{eq:boundn}).

For $n=4$, if we evaluate $p(X)$ in $w^i$, $i=0,1,2,3$, then the 4
encoded fragments $p(1),p(w),p(w^2),p(w^3)$ are linearly independent and there
is no self-repair possible.

Now for $n=7$, and say, $1,w,w^2,w^4,w^5,w^8,w^{10}$, we get:
\[
(p(1),p(w),p(w^2),p(w^4),p(w^5),p(w^8),p(w^{10})).
\]
Note that
\[
\begin{array}{ll}
w^4 =  w + 1       & w^{10}=w^2+w+1 \\
w^5 =  w^2 + w     &w^{11}=w^3+w^2+w\\
w^6 =  w^3 + w^2   &w^{12}=w^3+w^2+w+1\\
w^7 =  w^3 + w +1  &w^{13}=w^3+w^2+1\\
w^8 =  w^2+1       &w^{14}=w^3+1\\
w^9 =  w^3+w       &w^{15}=1.
\end{array}
\]
Suppose node 5 which stores $p(w^5)$ goes offline. A new comer can get
$p(w^5)$ by asking for $p(w^2)$ and $p(w)$, since
\[
p(w^5)=p(w^2+w)=p(w^2)+p(w).
\]
Table \ref{tab:enumerate} shows other examples of missing fragments and which
pairs can reconstruct them, depending on if 1, 2, or 3 fragments are missing
at the same time.

\begin{table}
\begin{tabular}{|c|c|}
  \hline
  missing &pairs to reconstruct missing fragment(s)\\fragment(s)&\\
  \hline
  $p(1)$   & $(p(w),p(w^4))$;$(p(w^2),p(w^8))$;$(p(w^5),p(w^{10}))$\\
  $p(w)$   & $(p(1),p(w^4))$;$(p(w^2),p(w^5))$;$(p(w^8),p(w^{10}))$\\
  $p(w^2)$ & $(p(1),p(w^8))$;$(p(w),p(w^5))$;$(p(w^4),p(w^{10}))$\\
  \hline
  $p(1)$ and & $(p(w^2),p(w^8))$ or $(p(w^5),p(w^{10}))$ for $p(1)$\\
  $p(w)$     & $(p(w^8),p(w^{10}))$ or $(p(w^2),p(w^{5}))$ for $p(w)$\\
  \hline
  $p(1)$ and & $(p(w^5),p(w^{10}))$ for $p(1)$\\
  $p(w)$ and & $(p(w^8),p(w^{10}))$ for $p(w)$\\
  $p(w^2)$   & $(p(w^4),p(w^{10}))$ for $p(w^2)$  \\
  \hline
\end{tabular}
\vspace{1mm}
\caption{Ways of reconstructing missing fragment(s) in Example
\ref{ex:complete}}\vspace{-6mm}
\label{tab:enumerate}
\end{table}
As for decoding, since $p(X)$ is of degree 5, a node that wants to recover the
data needs $k=3$ linearly independent fragments, say $p(w),p(w^2),p(w^3)$, out
of which it can generate $p(aw+bw^2+cw^3)$, $a,b,c \in \{0,1\}$.
Out of the $7$ non-zero coefficients, 5 of them are enough to recover $p$.
\end{ex}

As shown in the above example, given $k$ fragments, there are different values
of $n$ up to $\nmax$, and different choices of $\{\alpha_1,\ldots,\alpha_n\}$
that can be chosen to define a self-repairing code. We will focus on choosing
the set of $\alpha_i$ to form a subspace of $\FF_{\nmax}$, choice which results
in a particularly nice symmetric structure of the code, namely an XOR-like
structure. However, it is worth repeating that though the encoded fragments
can be obtained as XORs of each other, each fragment is actually containing
information about all the different fragments, which is very different than
a simple XOR of the data itself.
From now on, we will refer to this code as {\em Homomorphic SRC}, and will
write $\hsrc(n,k)$ to emphasize the
code parameters. The analysis that follows refers to this family of
self-repairing codes.

%
%
\section{Static Resilience Analysis}
\label{sec:static}

The rest of the paper is dedicated to the analysis of the proposed homomorphic
self-repairing codes.
\emph{Static resilience} of a distributed storage system is defined as the probability that an object, once stored in the system, will continue to stay available without any further maintenance, even when a certain fraction of individual member nodes of the distributed system become unavailable.
We start the evaluation of the proposed scheme with a static resilience analysis, where we study how a stored object can be recovered using HSRCs, compared with traditional erasure codes, prior to considering the maintenance process, which will be done in Section \ref{sec:dynamic}.

Let $\probup$ be the probability that any specific node is available. Then, under the assumptions that node availability is $i.i.d$, and no two fragments of the same object are placed on any same node, we can consider that the availability of any fragment is also $i.i.d$ with probability
$\probup$.

\subsection{A network matrix representation}

Recall that using the above coding strategy, an object $\ov$ of length $M$ is decomposed
into $k$ fragments of length $M/k$:
\[
\ov=(\ov_1,\ldots,\ov_k),~\ov_i\in\FF_{2^{M/k}},
\]
which are further encoded into $n$ fragments of same length:
\[
\pv=(\pv_1,\ldots,\pv_n),~\pv_i\in\FF_{2^{M/k}},
\]
each of the encoded fragment $\pv_i$ is given to a node to be stored.
We thus have $n$ nodes each possessing a binary vector of length $M/k$,
which can be represented as an $n \times M/k$ binary matrix
\begin{equation}\label{eq:MM}
\MM=
\left(
\begin{array}{c}
\pv_1\\
\vdots\\
\pv_n
\end{array}
\right)
=
\left(
\begin{array}{ccc}
p_{1,1} & \ldots & p_{1,M/k}\\
& \vdots & \\
p_{n,1}& \ldots & p_{n,M/k}
\end{array}
\right)
\end{equation}
with $p_{i,j}\in\FF_2$.

\begin{ex}\label{ex:M}
In Example \ref{ex:complete}, we have for $n=4$ that $\MM=I_4$, the 4-dimensional identity matrix,
while for $n=7$, it is
\[
\MM^T=
\left(
\begin{array}{ccccccc}
1&0&0&1&0&1&1\\
0&1&0&1&1&0&1\\
0&0&1&0&1&1&1\\
0&0&0&0&0&0&0\\
\end{array}
\right).
\]
\end{ex}
Thus unavailability of a random node is equivalent to losing a random row of the matrix $\mathbb{M}$. If multiple random nodes (say $n-x$) become unavailable, then the remaining $x$ nodes provide $x$ encoded fragments,
which can be represented by a $x \times M/k$ sub-matrix $\mathbb{M}_x$
of $\MM$. For any given combination of such $x$ available encoded fragments, the original object can still be reconstructed if we can obtain at least $k$ linearly independent rows of $\MM_x$. This is equivalent to say that the object can be reconstructed if the rank of the matrix
$\mathbb{M}_x$ is larger than or equal to $k$.

\subsection{Probability of object retrieval}

Consider a $(2^d-1) \times d$ binary matrix for some $d>1$, with distinct rows, no all zero row, and thus rank $d$.
If we remove some of the rows uniformly randomly with some probability
$1-\probup$, then we are left with a $x \times d$ sub-matrix - where $x$ is binomially distributed. We define $R(x,d,r)$ as the number of
$x \times d$ sub-matrices with rank $r$, voluntarily including all the possible permutations of the rows in the counting.

\begin{lem}\label{lem:count}
Let $R(x,d,r)$ be the number of $x\times d$ sub-matrices
with rank $r$ of a tall $(2^d-1)\times d$ matrix of rank $d$.
We have that $R(x,d,r)=0$ when (i) $r=0$, (ii) $r>x$, (iii) $r=x$, with $x>d$, or (iv) $r<x$ but $r>d$.
Then, counting row permutations:
\[
R(x,d,r)=
\prod_{i=0}^{r-1}(2^d-2^i)\mbox{ if }r=x,x\leq d,
\]
and for $r< x$ with $r\leq d$:
\[
R(x,d,r)=
R(x-1,d,r-1)(2^d-2^{r-1})+R(x-1,d,r)(2^r-x).
\]
\end{lem}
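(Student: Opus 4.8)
The plan is to reinterpret $R(x,d,r)$ as a purely combinatorial count. A $(2^d-1)\times d$ binary matrix with distinct rows and no zero row must have as its set of rows exactly the $2^d-1$ nonzero vectors of $\FF_2^d$; in particular its rank is $d$, which is the ``thus'' in the statement. Choosing an $x\times d$ sub-matrix together with an ordering of its rows is then the same as choosing an ordered tuple $(v_1,\ldots,v_x)$ of pairwise distinct nonzero vectors of $\FF_2^d$, and $R(x,d,r)$ equals the number of such tuples with $\dim\langle v_1,\ldots,v_x\rangle=r$. The four vanishing cases are then immediate: such a span always has dimension at most $\min(x,d)$, which kills (ii) and (iv); it has dimension $x$ only when $v_1,\ldots,v_x$ are linearly independent, which forces $x\le d$ and kills (iii); and for $x\ge 1$ the tuple contains a nonzero vector, so its span has dimension at least $1$, killing (i).

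For the diagonal case $r=x\le d$, I would count ordered tuples of linearly independent vectors by the usual sequential argument: $v_1$ can be any of the $2^d-1=2^d-2^0$ nonzero vectors, and once linearly independent $v_1,\ldots,v_i$ are fixed (spanning a subspace with $2^i$ elements), $v_{i+1}$ may be any vector outside that subspace, i.e.\ $2^d-2^i$ choices, which automatically keeps the tuple linearly independent, nonzero, and distinct; multiplying gives $\prod_{i=0}^{x-1}(2^d-2^i)$. For the recursion ($r<x$, $r\le d$), I would classify a tuple $(v_1,\ldots,v_x)$ by its prefix $(v_1,\ldots,v_{x-1})$: adjoining $v_x$ raises the span's dimension by $0$ or $1$, so the prefix has rank $r$ or $r-1$. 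If the prefix has rank $r-1$ --- there are $R(x-1,d,r-1)$ of these --- then $v_x$ must lie outside the $2^{r-1}$-element span of the prefix, which gives $2^d-2^{r-1}$ choices and makes $v_x$ automatically nonzero and distinct from $v_1,\ldots,v_{x-1}$ (all of which lie in that span). If the prefix has rank $r$ --- there are $R(x-1,d,r)$ of these --- then $v_x$ must lie inside the $2^r$-element span, be nonzero, and be distinct from the $x-1$ distinct vectors already there, leaving $2^r-1-(x-1)=2^r-x$ choices. Adding the two disjoint cases gives the claimed recursion.

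There is no deep obstacle here; the work is entirely in the bookkeeping for the recursion. The point that needs care is making sure the distinctness and non-zero constraints on $v_x$ are counted exactly once: in the ``rank increases'' branch both come for free, while in the ``rank stays'' branch one excludes precisely the zero vector and the $x-1$ already-chosen vectors of the prefix --- and these are genuinely $x-1$ distinct elements of that span, not fewer. I would also note for completeness that whenever $R(x-1,d,r)\neq 0$ one necessarily has $x-1\le 2^r-1$, so the factor $2^r-x$ is non-negative exactly when it multiplies a nonzero term; and that the subproblems $R(x-1,d,r-1)$ and $R(x-1,d,r)$ again satisfy the lemma's hypotheses, so iterating the recursion terminates at the diagonal case or at one of the vanishing cases, which makes the recurrence well-founded.
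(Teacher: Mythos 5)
Your proof is correct and follows essentially the same route as the paper's: a sequential count $\prod_{i=0}^{r-1}(2^d-2^i)$ for the full-rank case, and a recursion obtained by classifying the last row according to whether it raises the rank ($2^d-2^{r-1}$ choices) or keeps it ($2^r-1-(x-1)=2^r-x$ choices). Your additional remarks on distinctness, the non-negativity of the factor $2^r-x$, and the well-foundedness of the recursion are careful bookkeeping that the paper leaves implicit, but the argument is the same.
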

\begin{proof}
There are no non-trivial matrix with rank $r=0$.
When $r>x$, $r=x$ with $x>d$, or $r<x$ but $r>d$, $R(x,d,r)=0$ since the rank of a matrix cannot be larger than the smallest of its dimensions.

For the case when $r=x$, with $x\leq d$, we deduce $R(x,d,r)$ as follows. To build a matrix $\MM_x$ of rank $x=r$, the first row can be chosen from any of the $2^d-1$ rows in $\MM$, and the second row should not be a multiple of the first row, which gives $2^d-2$ choices. The third row needs to be linearly independent from the first two rows. Since there are $2^2$ linear combinations of the first two rows, which includes the all zero vector which is discarded, we obtain  $2^d-2^2$ choices. In general, the $(i+1)$st row can be chosen from $2^d-2^i$ options that are linearly independent from the $i$ rows that have already been chosen. We thus obtain $R(x,d,r) = \prod_{i=0}^{r-1}(2^d-2^i)$ for $r=x$, $x\leq d$.

For the case where $r< x$ with $r\leq d$, we observe that $x \times d$ matrices of rank $r$ can be inductively obtained by either (I) adding a linearly independent row to a $(x-1) \times d$ matrix of rank $r-1$, or (II) adding a linearly dependent row to a $(x-1) \times d$ matrix of rank $r$. We use this observation to derive the recursive relation
\[
R(x,d,r)= R(x-1,d,r-1)(2^d-2^{r-1})+R(x-1,d,r)(2^r-x),
\]
where $2^d-1-(2^{r-1}-1)$ counts the number of
linearly independent rows that can be added, and $2^r-1-(x-1)$ is on the
contrary the number of linearly dependent rows.
\end{proof}
We now remove the permutations that we counted in the above analysis by introducing
a suitable normalization.
\begin{cor}\label{cor:rho}
Let $\rho(x,d,r)$ be the fraction of sub-matrices of dimension $x \times d$ with rank $r$
out of all possible sub-matrices of the same dimension. Then
\[
\rho(x,d,r)=\frac{R(x,d,r)}{\sum_{j=0}^dR(x,d,j)}=\frac{R(x,d,r)}{C_{x}^{2^d-1}x!}.
\]
\end{cor}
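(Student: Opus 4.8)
The plan is to read the denominator $\sum_{j=0}^{d}R(x,d,j)$ as a count of \emph{all} ordered $x\times d$ sub-matrices of the tall $(2^d-1)\times d$ matrix, regardless of rank, and to evaluate that count in closed form. First I would note that, by the way $R(x,d,j)$ was defined in Lemma \ref{lem:count} (with all row orderings voluntarily included), summing over $j$ counts every ordered selection of $x$ rows exactly once: each such selection has a well-defined rank, which lies in $\{1,\ldots,\min(x,d)\}$ because the tall matrix has no zero row, so the terms $R(x,d,0)$ contribute nothing. Since the $2^d-1$ rows are pairwise distinct, an ordered selection of $x$ of them is an injection of an $x$-element set into the row set, and there are $(2^d-1)(2^d-2)\cdots(2^d-x)=\frac{(2^d-1)!}{(2^d-1-x)!}=C_x^{2^d-1}\,x!$ of these, which is the second displayed equality.

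For the first equality, the key observation is that permuting rows changes neither the rank of a sub-matrix nor its membership in the families being counted, so each \emph{unordered} choice of $x$ rows accounts for exactly $x!$ ordered sub-matrices, uniformly over all ranks. Hence the fraction $\rho(x,d,r)$ of rank-$r$ sub-matrices is the same whether one counts with or without the row ordering, and equals $R(x,d,r)\big/\sum_{j=0}^{d}R(x,d,j)$; dividing by $\sum_{j}R(x,d,j)=C_x^{2^d-1}x!$ is precisely the normalization that removes the over-counting by row permutations introduced in Lemma \ref{lem:count}.

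An alternative to the direct count is an induction on $x$ from the recursion of Lemma \ref{lem:count}. Writing $S(x)=\sum_{j=0}^{d}R(x,d,j)$, the recursion gives $S(x)=\sum_{j}R(x-1,d,j-1)(2^d-2^{j-1})+\sum_{j}R(x-1,d,j)(2^{j}-x)$ (the boundary case $j=x\le d$ being covered automatically since $R(x-1,d,x)=0$); after reindexing the first sum, the terms in $2^{j-1}$ cancel against the terms in $2^{j}$ of the second sum, leaving $S(x)=(2^d-x)S(x-1)$, which together with $S(1)=2^d-1$ yields $S(x)=(2^d-1)(2^d-2)\cdots(2^d-x)=C_x^{2^d-1}x!$. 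In either approach the only delicate points are the boundary ranks ($j=x$ versus $j<x$) and the vanishing cases $R(x,d,j)=0$, but these are exactly the situations already itemized in Lemma \ref{lem:count}, so no new argument is needed; I expect the bookkeeping behind the ``permutations cancel'' step to be the main, and rather minor, obstacle.
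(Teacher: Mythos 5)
Your main argument is correct and is essentially the paper's own proof: the denominator is just the total number of ordered selections of $x$ distinct rows from the $2^d-1$ rows, namely $C_x^{2^d-1}\,x!$, and since $R(x,d,r)$ counts with row orderings included, the ratio is the stated fraction. The additional inductive verification $S(x)=(2^d-x)S(x-1)$ from the recursion is a correct consistency check but goes beyond what the paper does; no gap either way.
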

\begin{proof}
It is enough to notice that there are $C_{x}^{2^d-1}$ ways to choose $x$ rows out of the possible $2^d-1$ options.
The chosen $x$ rows can be ordered in $x!$ permutations.
\end{proof}
We now put together the above results to compute the probability $\objup$ of an object being recoverable when
using an $\hsrc(n,k)$ code to store a length $M$ object made of $k$ fragments encoded into $n$ fragments each
of length $M/k$.
\begin{cor}
Using an $\hsrc(n,k)$, the probability $\objup$ of recovering the object is
\[
\objup = \sum_{x=k}^n \sum_{r=k}^d \rho(x,d,r) C_{x}^{n} \probup^x (1-\probup)^{n-x},
\]
where $d=\log_2{n+1}$.
\end{cor}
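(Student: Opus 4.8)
The plan is to combine the binomial distribution of the number of available nodes with the rank statistics of random submatrices already packaged in Lemma~\ref{lem:count} and Corollary~\ref{cor:rho}. First I would observe that setting $d=\log_2(n+1)$ amounts to assuming $n=2^d-1$, which is precisely the regime in which the $n$ evaluation points are the nonzero elements of a $d$-dimensional $\FF_2$-subspace of $\FF_{2^{M/k}}$. Since the encoding polynomial is weakly linearized (Lemma~\ref{lem:sr}) and injective on that subspace, the network matrix $\MM$ of~(\ref{eq:MM}) has exactly $2^d-1$ distinct nonzero rows and rank $d$; hence it meets the hypotheses under which $R(\cdot,\cdot,\cdot)$ and $\rho(\cdot,\cdot,\cdot)$ were defined.

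Next I would condition on the number $x$ of available nodes. Because node availabilities are i.i.d.\ Bernoulli$(\probup)$, $x$ is binomially distributed, $\Pr[x\text{ nodes available}]=C_x^n\probup^x(1-\probup)^{n-x}$, and, conditioned on $x$, the set of surviving rows of $\MM$ is a uniformly random $x$-subset of its $n$ rows. Therefore the conditional probability that the surviving submatrix $\MM_x$ has rank exactly $r$ equals $\rho(x,d,r)$ by Corollary~\ref{cor:rho}. I would then recall the characterization established above: the object is recoverable from the surviving fragments if and only if $\mathrm{rank}(\MM_x)\geq k$. As the rank of an $x\times d$ matrix never exceeds $d$, the conditional recovery probability given $x$ available nodes is $\sum_{r=k}^{d}\rho(x,d,r)$.

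Finally, applying the law of total probability over $x=0,\ldots,n$ gives $\objup=\sum_{x=0}^{n}C_x^n\probup^x(1-\probup)^{n-x}\sum_{r=k}^{d}\rho(x,d,r)$; since $x<k$ forces $r\geq k>x$ and hence $R(x,d,r)=0$ (case (ii) of Lemma~\ref{lem:count}), the terms with $x<k$ vanish and I may start the outer sum at $x=k$, yielding
\[
\objup=\sum_{x=k}^{n}\sum_{r=k}^{d}\rho(x,d,r)\,C_x^n\probup^x(1-\probup)^{n-x}.
\]

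The only genuinely delicate point is the claim that, conditioned on exactly $x$ nodes surviving, the resulting submatrix is \emph{uniform} over $x$-row submatrices and therefore has rank distribution $\rho(x,d,\cdot)$: this uses both the i.i.d.\ assumption (so that all $x$-subsets are equally likely) and the fact that $\rho$ in Corollary~\ref{cor:rho} was normalized with row permutations divided out, so it is already a probability over unordered $x$-subsets. Everything else is bookkeeping, namely matching the support of the double sum to the constraints $k\leq r\leq d$ and $k\leq x\leq n$ recorded in Lemma~\ref{lem:count}.
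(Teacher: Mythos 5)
Your proof is correct and follows essentially the same route as the paper: recovery iff $\mathrm{rank}(\MM_x)\geq k$, the binomial law for the number $x$ of surviving fragments, and the rank distribution $\rho(x,d,r)$ from Corollary~\ref{cor:rho}, with $d=\log_2(n+1)$ playing exactly the role of the paper's reduction to the non-redundant columns of $\MM$ when $n<\nmax$. You merely make explicit the law-of-total-probability assembly and the uniformity of the surviving row subset, which the paper's two-line proof leaves implicit.
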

\begin{IEEEproof}
If $n=\nmax=2^{M/k}-1$, we apply Lemma \ref{lem:count} and Corollary
\ref{cor:rho} with $d=M/k$.
If $n=2^i-1$, for some integer $i\leq M/k$ such that $n>k$ (otherwise there is no encoding), then $\MM$ has $M/k-i$ columns which are either all zeros or all ones vectors, as shown on Example \ref{ex:M}.
Thus the number of its sub-matrices of rank $r$ is given by applying Lemma \ref{lem:count} on the matrix obtained
by removing these redundant columns.
\end{IEEEproof}
We validate the analysis with simulations, and as can be observed from Figure \ref{fig:simvsana}, we obtain a precise match.

\subsection{Comparison with standard erasure codes}

Let us compare the storage overhead of the proposed scheme against
standard erasure codes. If we use a $(n,k)$ erasure code, then the probability
that the object is recoverable is:
\[
\objup = \sum_{i=k}^{n}C_{i}^n \probup^i (1-\probup)^{n-i}.
\]
\begin{center}
\begin{figure}[htbp]
 \subfigure[Validation of the static resilience analysis]{
  \includegraphics[scale=0.5]{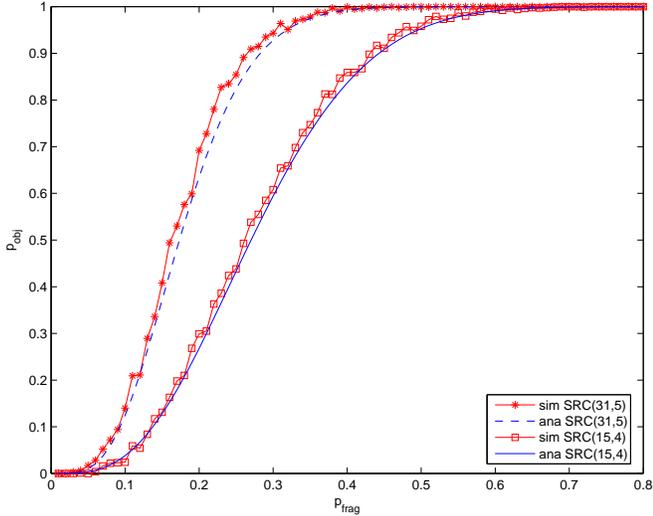}
   \label{fig:simvsana}}
 \subfigure[Comparison of SRC with EC]{
  \includegraphics[scale=0.5]{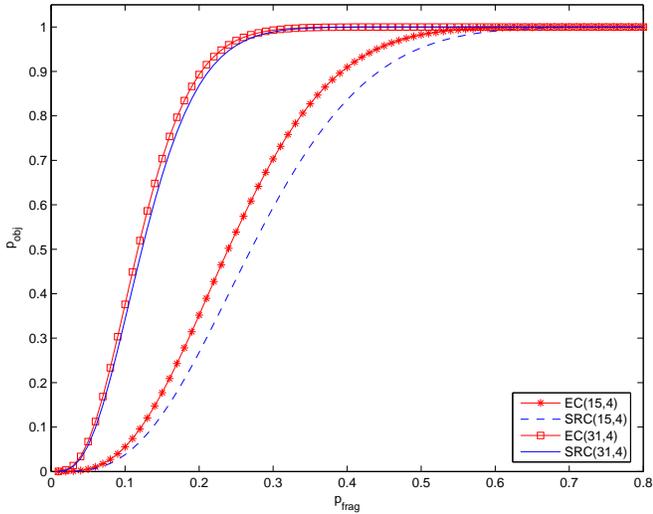}
   \label{fig:ECvsSRCk4}}
 \subfigure[Comparison of SRC with EC]{
  \includegraphics[scale=0.5]{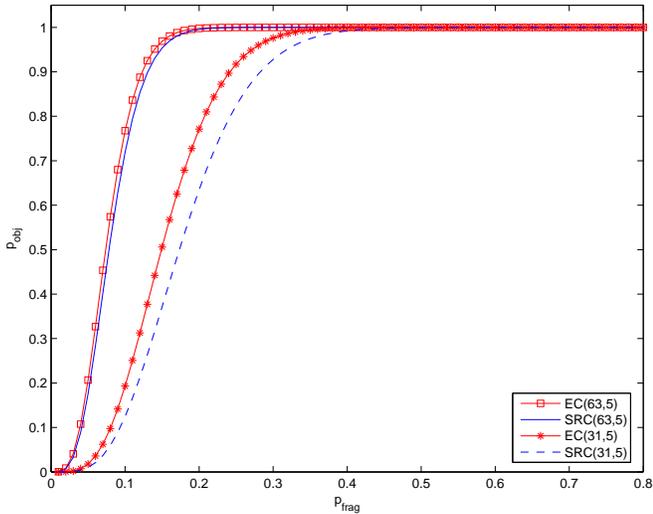}
   \label{fig:ECvsSRCk5}}
 \label{fig:staticresilience}
 \caption{Static resilience of self-repairing codes (SRC): Validation of analysis, and comparison with erasure codes (EC)}
\end{figure}
\end{center}
In Figures \ref{fig:ECvsSRCk4} and \ref{fig:ECvsSRCk5}, we compare the static resilience achieved using the proposed homomorphic SRC with that of traditional erasure codes.

In order to achieve the self-repairing property in SRC, it is obvious that it is necessary to introduce extra `redundancy' in its code structure, but we notice from the comparisons that this overhead is in fact marginal. For the same storage overhead $n/k$, the overall static resilience of SRC is only slightly lower than that of EC, and furthermore, for a fixed $k$, as the value of $n$ increases, SRC's static resilience gets very close to that of EC. Furthermore, even for low storage overheads, with relatively high $\probup$, the probability of object availability is indeed 1. In any storage system, there will be a maintenance operation to replenish lost fragments (and hence, the system will operate for high values of $\probup$). We will further see in the next section that SRCs have significantly lower maintenance overheads. These make SRCs a practical coding scheme for networked storage.

%
%

\section{Communication overheads of self-repair}
\label{sec:dynamic}

In the previous section we studied the probability of recovering an object if it so happens that only $\probup$ fraction of nodes which had originally stored the encoded fragments continue to remain available, while lost redundancy is yet to be replenished. Such a situation may arise either because a lazy maintenance mechanism (such as, in \cite{TotalRecall}) is applied, which triggers repairs only when redundancy is reduced to certain threshold, or else because of multiple correlated failures before repair operations may be carried out. We will next investigate the communication overheads in such scenarios.
Note that this is really the regime in which we need an analysis, since
in absence of correlated failures, and assuming that an eager repair strategy is applied, whenever one encoded block is detected to be unavailable, it is immediately replenished. The proposed HSRC ensures that this one missing fragment can be replenished by obtaining only two other (appropriate) encoded fragments, thanks to the HSRC subspace structure.

\begin{defn}The diversity $\divsrc$ of SRC is defined as the number of mutually exclusive pairs of fragments which can be used to recreate any specific fragment.
\end{defn}

In Example \ref{ex:complete}, it can be seen easily that $\divsrc=3$. Let us assume that $p(w)$ is missing. Any of the three exclusive fragment pairs, namely $((p(1),p(w^4))$; $(p(w^2),p(w^5))$ or $(p(w^8),p(w^{10}))$ may be used to reconstruct $p(w)$. See Table \ref{tab:enumerate} for other examples.

\begin{lem} The diversity $\divsrc$ of a $\hsrc(n,k)$ is $(n-1)/2$.
\end{lem}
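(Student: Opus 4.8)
The plan is to exploit the subspace structure of the evaluation points $\{\alpha_1,\ldots,\alpha_n\}$ chosen in the $\hsrc(n,k)$ construction. By design (the paragraph preceding Example \ref{ex:complete} and the discussion of the XOR-like structure), when $n = 2^d - 1$ the $\alpha_i$ are exactly the $2^d - 1$ nonzero elements of a $d$-dimensional $\FF_2$-subspace $V \subseteq \FF_{2^{M/k}}$. By Lemma \ref{lem:sr}, for any $\alpha, \beta \in V$ we have $p(\alpha + \beta) = p(\alpha) + p(\beta)$, so a fragment $p(\gamma)$ can be reconstructed from the pair $(p(\alpha), p(\beta))$ precisely when $\alpha + \beta = \gamma$ with $\alpha, \beta \in V \setminus \{0\}$ and $\alpha \neq \beta$.

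First I would fix the target fragment, say $p(\gamma)$ with $\gamma \in V\setminus\{0\}$, and count the \emph{unordered} pairs $\{\alpha,\beta\}$ of distinct nonzero elements of $V$ with $\alpha + \beta = \gamma$. For each nonzero $\alpha \in V$ with $\alpha \neq \gamma$, the element $\beta := \alpha + \gamma$ is automatically a nonzero element of $V$ (it is in $V$ by closure, and nonzero since $\alpha \neq \gamma$), and $\beta \neq \alpha$ since $\gamma \neq 0$. This gives $2^d - 2$ ordered pairs $(\alpha,\beta)$; since the pairing $\alpha \leftrightarrow \alpha + \gamma$ is a fixed-point-free involution on this set of $2^d-2$ elements, the ordered pairs come in $\{\alpha,\beta\}$, $\{\beta,\alpha\}$ duos, yielding $(2^d - 2)/2 = 2^{d-1} - 1$ unordered pairs. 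The key observation is that these pairs are automatically \emph{mutually exclusive}: if $\alpha + \gamma = \beta$ and $\alpha' + \gamma = \beta'$ share an element, say $\alpha = \alpha'$, then $\beta = \beta'$ as well, so two distinct pairs summing to $\gamma$ share no fragment. Hence all $2^{d-1}-1$ pairs can serve simultaneously as the mutually exclusive repair pairs, so $\divsrc = 2^{d-1} - 1$.

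It remains to rewrite $2^{d-1} - 1$ in terms of $n$. Since $n = 2^d - 1$, we have $2^{d-1} = (n+1)/2$, so $\divsrc = (n+1)/2 - 1 = (n-1)/2$, as claimed. I would also remark that this pair-counting is tight: any repair pair for $p(\gamma)$ must consist of two encoded fragments whose evaluation points sum to $\gamma$ (this is the only way Lemma \ref{lem:sr} produces $p(\gamma)$ from two terms), so we have in fact enumerated \emph{all} valid pairs, not merely a mutually exclusive subfamily, and they happen to already be mutually exclusive — giving the exact value of $\divsrc$ rather than just a lower bound.

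The main obstacle, such as it is, is bookkeeping rather than mathematics: one must be careful to (i) justify that only two-term combinations are being counted (the definition of $\divsrc$ speaks of \emph{pairs}), (ii) handle the case $n < \nmax$, where $n = 2^i - 1$ for $i \le M/k$ and the $\alpha_i$ still form a subspace of dimension $i$ (so the argument goes through verbatim with $d$ replaced by $i$), and (iii) confirm that distinctness $\alpha \neq \beta$ is forced by $\gamma \neq 0$ and that the involution $\alpha \mapsto \alpha + \gamma$ genuinely has no fixed points on $V \setminus \{0, \gamma\}$. None of these is deep, but skipping any of them would leave a gap in the count.
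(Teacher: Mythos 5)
Your proposal is correct and follows essentially the same route as the paper's own proof: fix the target evaluation point, pair each remaining nonzero point $\alpha$ with $\alpha+\gamma$, count $2^d-2$ ordered pairs, and halve to get $(n-1)/2$. You are somewhat more careful than the paper in explicitly checking mutual exclusivity and that these are \emph{all} valid two-element repair sets, but the underlying argument is identical.
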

\begin{IEEEproof}
We have that $n=2^d-1$ for some suitable $d$.
The polynomial $p(x)$ is evaluated in $\alpha = \sum_{i=0}^{d-1}a_i w^i$, where $a_i \in \{0,1\}$ and $(a_0,...,a_{d-1})$ takes all the possible $2^d$ values, but for the whole zero one. Thus for every $\alpha$, we can create the pairs $(\alpha+\beta,\beta)$ where $\beta$ takes $2^d-2$ possible values, that is all values besides 0 and $\alpha$. This gives $2^d-2$ (which is equal to $n-1$) pairs, but since pairs $(\alpha+\beta,\beta)$ and  $(\beta,\alpha+\beta)$ are equivalent, we have $(n-1)/2$ distinct such pairs.
\end{IEEEproof}

An interesting property of SRC can be inferred from its diversity.
\begin{cor}
\label{cor:2isenough}
For a Homomorphic SRC, if at least $(n+1)/2$ fragments are available, then for any of the unavailable fragments, there exists some pair of available fragments which is adequate to reconstruct the unavailable fragment.
\end{cor}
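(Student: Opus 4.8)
The plan is to reduce the statement to a counting argument using the diversity result (the previous lemma), which says that any fixed unavailable fragment $p(\alpha)$ can be reconstructed from any one of $\divsrc = (n-1)/2$ \emph{mutually exclusive} (pairwise disjoint) pairs of other fragments. The key observation is that these $(n-1)/2$ pairs partition a set of $n-1$ fragments (all encoded fragments other than $p(\alpha)$ itself) into disjoint two-element blocks. So I would first fix an arbitrary unavailable fragment $p(\alpha)$, and consider the collection $\mathcal{P}_\alpha$ of these $(n-1)/2$ disjoint pairs guaranteed by the diversity lemma.

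Next I would argue by contradiction (or directly, by the pigeonhole principle): suppose that none of the pairs in $\mathcal{P}_\alpha$ is fully available, i.e., every one of these $(n-1)/2$ disjoint pairs contains at least one unavailable fragment. Since the pairs are mutually disjoint, this forces at least $(n-1)/2$ distinct unavailable fragments among the $n-1$ fragments other than $p(\alpha)$. Adding $p(\alpha)$ itself, which is unavailable by hypothesis, we get at least $(n-1)/2 + 1 = (n+1)/2$ unavailable fragments in total, hence at most $n - (n+1)/2 = (n-1)/2$ available fragments. This contradicts the hypothesis that at least $(n+1)/2$ fragments are available. Therefore at least one pair in $\mathcal{P}_\alpha$ is entirely available, and by the diversity lemma this pair reconstructs $p(\alpha)$. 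Since $\alpha$ was an arbitrary unavailable fragment, the claim follows.

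I do not expect any real obstacle here; the only points that need care are (i) making sure the $(n-1)/2$ pairs used are genuinely pairwise disjoint and cover exactly the $n-1$ fragments distinct from $p(\alpha)$ — this is exactly what "mutually exclusive pairs" in the definition of $\divsrc$ and the construction in the diversity lemma's proof (the pairs $(\alpha+\beta,\beta)$ as $\beta$ ranges over the $n-1$ nonzero values different from $\alpha$) give us — and (ii) the arithmetic that $n$ is odd, so that $(n\pm 1)/2$ are integers, which holds since $n = 2^d - 1$. One could also remark that the bound $(n+1)/2$ is tight, since with only $(n-1)/2$ available fragments an adversary could pick one fragment from each disjoint pair together with $p(\alpha)$, leaving no complete pair; but this is a side remark, not needed for the statement.
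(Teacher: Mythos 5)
Your proposal is correct and is essentially the paper's own argument: the paper also invokes the $(n-1)/2$ mutually exclusive pairs from the diversity lemma and observes that with more than $(n-1)/2$ available fragments, at least one pair must be fully available — the same pigeonhole count, which you merely phrase as a contradiction. Your added care about the pairs being disjoint and covering the $n-1$ other fragments, and about $n=2^d-1$ being odd, is consistent with (and slightly more explicit than) what the paper writes.
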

\begin{IEEEproof}
Consider any arbitrary missing fragment $\alpha$. If up to $(n-1)/2$ fragments were available, in the worst case, these could belong to the $(n-1)/2$ exclusive pairs. However, if an additional fragment is available, it will be paired with one of these other fragments, and hence, there will be at least one available pair with which $\alpha$ can be reconstructed.
\end{IEEEproof}

\subsection{Overheads of recreating one specific missing fragment}

Recall that $x$ is defined as the number of fragments of an object that are available at a given time point. For any specific missing fragment, any one of the corresponding mutually exclusive pairs is adequate to recreate the said fragment. From Corollary \ref{cor:2isenough} we know that if $x \geq (n+1)/2$ then two downloads are enough. Otherwise, we need a probabilistic analysis. Both nodes of a specific pair are available with probability $(x/n)^2$. The probability that only two fragments are enough to recreate the missing fragment is $p_2 = 1-(1- (x/n)^2)^\divsrc$.

If two fragments are not enough to recreate a specific fragment, it may still be possible to reconstruct it with larger number of fragments. A loose upper bound can be estimated by considering that if 2 fragments are not adequate, $k$ fragments need to be downloaded to reconstruct a fragment,\footnote{Note than in fact, often
fewer than $k$ fragments will be adequate to reconstruct a specific fragment.} which happens with
  a probability $1-p_2 = (1- (x/n)^2)^\divsrc$.


Thus the expected number $D_x$ of fragments that need to be downloaded to recreate one fragment, when $x$ out of the $n$ encoded fragments are available, can be determined as:
\begin{eqnarray*}
D_x=2 & \mbox{if } x\geq(n+1)/2\\
D_x<2p_2 + k(1-p_2)& \mbox{if } x < (n+1)/2.
\end{eqnarray*}

\subsection{Overhead of recreating all missing fragments}

Above, we studied the overheads to recreate one fragment. All the missing fragments may be repaired, either in parallel (distributed in different parts of the network) or in sequence. If all missing fragments are repaired in parallel, then the total overhead $D_{prl}$ of downloading necessary fragments is: $$D_{prl} = (n-x) D_x.$$

If they are recreated sequentially, then the overhead $D_{seq}$ of downloading necessary fragments is: $$D_{seq} = \sum_{i=x}^{n} D_i.$$

In order to directly compare the overheads of repair for different repair strategies - eager, or lazy parallelized and lazy sequential repairs using SRC, as well as lazy repair with traditional erasure codes, consider that lazy repairs are triggered when a threshold $x=x_{th}$ of available encoded fragments out of $n$ is reached. If eager repair were used for SRC encoded objects, download overhead of $$D_{egr}= 2(n-x_{th})$$ is incurred. Note that, when SRC is applied, the aggregate bandwidth usage for eager repair as well as both lazy repair strategies is the same, assuming that the threshold for lazy repair $x_{th} \geq (n+1)/2$.

In the setting of traditional erasure codes, let us assume that one node downloads enough ($k$) fragments to recreate the original object, and recreates one fragment to be stored locally, and also recreates the remaining $n-x_{th}-1$ fragments, and stores these at other nodes. This leads to a total network traffic: $$D_{EClazy}= k+n-x_{th}-1.$$ Eager strategy using traditional erasure codes will incur $k$ downloads for each repair, which is obviously worse than all the other scenarios, so we ignore it in our comparison.

Note that if less than half of the fragments are unavailable, as observed in Corollary \ref{cor:2isenough}, downloading two blocks is adequate to recreate any specific missing fragment. When too many blocks are already missing, applying a repair strategy analogous to traditional erasure codes, that of downloading $k$ blocks to recreate the whole object, and then recreate all the missing blocks is logical. That is to say, the benefit of reduced maintenance bandwidth usage for SRC (as also of other recent techniques like RGC) only makes sense under a regime when not too many blocks are unavailable. Let us define $x_c$ as the critical value, such that if the threshold for lazy repair in traditional erasure codes $x_{th}$ is less than this critical value, then, the aggregate fragment transfer traffic to recreate missing blocks will be less using the traditional technique (of downloading $k$ fragments to recreate whole object, and then replenish missing fragments) than by using SRC. Recall that for $x \geq (n+1)/2$, $D_{egr}=D_{prl}=D_{seq}$. One can determine $x_c$ as follows. We need $D_{egr} \leq D_{EClazy}$, implying that
\[
 2n-2x_{c} \leq n-1+k-x_{c}\Rightarrow x_c=n+1-k.
\]
Figure \ref{fig:trafficperlostblock} shows the average amount of network traffic to transfer encoded fragments per lost fragment when the various lazy variants of repair are used, namely parallel and sequential repairs with SRC, and (by default, sequential) repair when using EC. The \emph{x-axis} represents the threshold $x_{th}$ for lazy repair, such that repairs are triggered only if the number of available blocks for an object is not more than $x_{th}$. Use of an eager approach with SRC incurs a constant overhead of two fragments per lost block.\footnote{Note that there are other messaging overheads to disseminate necessary meta-information (e.g., which node stores which fragment), but we ignore these in the figure, considering that the objects being stored are large, and data transfer of object fragments dominates the network traffic. This assumption is reasonable, since for small-objects, it is well known that the meta-information storage overheads outweigh the benefits of using erasures, and hence erasures are impractical for small objects.}

\begin{figure}[htbp]
\begin{center}
\includegraphics[scale=0.5]{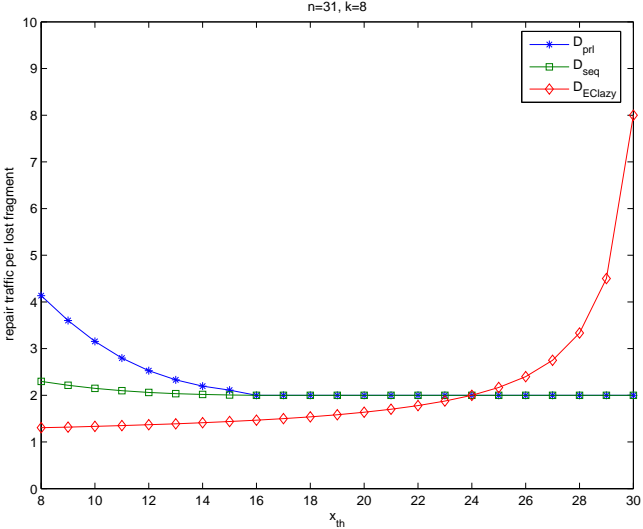}
\caption{Average traffic per lost block for various choices of $x_{th}$}
\label{fig:trafficperlostblock}
\end{center}\vspace{-6mm}
\end{figure}

There are several quantitative and qualitative implications of the above observed behaviors. To start with, we note that an engineering solution like lazy repair which advocates waiting before repairs are triggered, amortizes the repair cost per lost fragment, and is effective in reducing total bandwidth consumption and outperforms SRC (in terms of total bandwidth consumption), provided the threshold of repair $x_{th}$ is chosen to be lower than $x_c$. This is in itself not surprising. However, for many typical choices of $(n,k)$ in deployed systems such as $(16,10)$ in Cleversafe \cite{cleversafe}, or $(517,100)$ in Wuala \cite{wuala}, a scheme like SRC is practical. In the former scenario, $x_c$ is too low, and waiting so long makes the system too vulnerable to any further failures (i.e., poor system health). In the later scenario, that is, waiting for hundred failures before triggering repairs seems both unnecessary, and also, trying to repair 100 lost fragments simultaneously will lead to huge bandwidth spikes. The system's vulnerability to further failures, as well as spiky bandwidth usage are known problems of lazy repair strategies \cite{dattaSSS}.

Using SRC allows for a flexible choice of either an eager or lazy (but with much higher threshold $x_{th}$) approaches to carry out repairs, where the repair cost per lost block stays constant for a wide range of values (up till $x_{th}\geq(n+1)/2$). Such a flexible choice makes it easier to also benefit from the primary advantage of lazy repair in peer-to-peer systems, namely, to avoid unnecessary repairs due to temporary churn, without the drawbacks of (i) having to choose a threshold which leads to system vulnerability or (ii) choose a much higher value of $n$ in order to deal with such vulnerability, and (iii) have spiky bandwidth usage.

\subsection{Fast parallel repairs using SRC: A qualitative discussion}

We observed in the previous section that while SRC is effective in significantly reducing bandwidth usage to carry out maintenance of lost redundancy in coding based distributed storage systems, depending on system parameter choices, an engineering solution like lazy repair while using traditional EC may (or not) outperform SRC in terms of total bandwidth usage, even though using lazy repair with EC entails several other practical disadvantages.

A final advantage of SRC which we further showcase next is the possibility to carry out repairs of different fragments independently and in parallel (and hence, quickly). If repair is not fast, it is possible that further faults occur during the repair operations, leading to both performance deterioration as well as, potentially, loss of stored objects.

Consider the following scenario for ease of exposition: Assume that each node in the storage network has an uplink/downlink capacity of 1 (coded) fragment per unit time. Further assume that the network has relatively (much) larger aggregate bandwidth. Such assumptions correspond reasonably with various networked storage system environments.

Consider that for the Example \ref{ex:complete}, originally $n$ was chosen to be $n_{max}$, that is to say, a $\hsrc(15,3)$ was used. Because of some reasons (e.g., lazy repair or correlated failures), let us say that seven encoded fragments, namely $p(1),\ldots,p(w^6)$ are unavailable while fragments $p(w^7)...p(w^{15})$ are available. Table \ref{tab:enumeratereconstruction} enumerates possible pairs to reconstruct each of the missing fragments.

\begin{table}
\begin{tabular}{|c|c|}
  \hline
  fragment & suitable pairs to reconstruct\\
  \hline
  $p(1)$   & $(p(w^7),p(w^9))$;$(p(w^{11}),p(w^{12}))$\\
  $p(w)$   & $(p(w^7),p(w^{14}))$;$(p(w^8),p(w^{10}))$\\
  $p(w^2)$ & $(p(w^7),p(w^{12}))$;$(p(w^9),p(w^{11}))$;$(p(w^{12}),p(w^{10}))$\\
  $p(w^3)$ & $(p(w^8),p(w^{13}))$;$(p(w^{10}),p(w^{12}))$\\
  $p(w^4)$ & $(p(w^9),p(w^{14}))$;$(p(w^{11}),p(w^{13}))$\\
  $p(w^5)$ & $(p(w^7),p(w^{13}))$;$(p(w^{12}),p(w^{14}))$\\
  $p(w^6)$ & $(p(w^7),p(w^{10}))$;$(p(w^8),p(w^{14}))$\\
  \hline
\end{tabular}
\vspace{1mm}
\caption{Scenario: Seven fragments $p(1),\ldots,p(w^6)$ are missing}
\label{tab:enumeratereconstruction}\vspace{-8mm}
\end{table}

A potential schedule to download the available blocks at different nodes to recreate the missing fragments is as follows: In first time slot, $p(w^{11})$, $p(w^{10})$, $p(w^{12})$, nothing, $p(w^{13})$, $p(w^{7})$ and $p(w^{8})$ are downloaded separately by seven nodes trying to recreate each of $p(1),\ldots,p(w^6)$ respectively. In second time slot $p(w^{12})$, $p(w^{8})$, $p(w^{7})$, $p(w^{10})$, $p(w^{11})$, $p(w^{13})$ and $p(w^{14})$ are downloaded. Note that, besides $p(w^3)$, all the other missing blocks can now already be recreated. In third time slot, $p(w^{12})$ can be downloaded to recreate it. Thus, in this example, six out of the seven missing blocks could be recreated within the time taken to download two fragments, while the last block could be recreated in the next time round, subject to the constraints that any node could download or upload only one block in unit time.

Even if a full copy of the object (hybrid strategy \cite{Liskov}) were to be maintained in the system, with which to replenish the seven missing blocks, it would have taken seven time units. While, if no full copy was maintained, using traditional erasure codes would have taken at least nine time units.

This example demonstrates that SRC allows for fast reconstruction of missing blocks. Orchestration of such distributed reconstruction to fully utilize this potential in itself poses interesting algorithmic and systems research
challenges which we intend to pursue as part of future work.

\section{Conclusion}
We propose a new family of codes, called self-repairing codes, which are designed by taking into account specifically the characteristics of distributed networked storage systems. Self-repairing codes achieve excellent properties in terms of maintenance of lost redundancy in the storage system, most importantly: (i) low-bandwidth consumption for repairs (with flexible/somewhat independent choice of whether an eager or lazy repair strategy is employed), (ii) parallel and independent (thus very fast) replenishment of lost redundancy. When compared to erasure codes, the self-repairing property is achieved by marginally compromising on static resilience for same storage overhead, or conversely, utilizing marginally more storage space to achieve equivalent static resilience. This paper provides the theoretical foundations for SRCs, and shows its potential benefits for distributed storage. There are several algorithmic and systems research challenges in harnessing SRCs in distributed storage systems, e.g., design of efficient decoding algorithms, or placement of encoded fragments to leverage on network topology to carry out parallel repairs, which are part of our ongoing and future work.


%
%



%
%

\end{document}